\documentclass[journal=jacsat,12pt]{achemso}
\usepackage[utf8]{inputenc}

\usepackage{amsthm}
\newtheorem{theorem}{Theorem}
\usepackage{amsmath,tikz}
\usepackage{atbegshi} 
\setkeys{acs}{articletitle = true}
\usepackage{chemformula}
\usepackage{algorithm}
\usepackage{algpseudocode}
\usepackage{braket}
\usepackage{calrsfs}
\usepackage{array}

\def\t{{\text{t}}}
\usepackage{amssymb}
\usepackage{float}
\usepackage{multirow}
\usepackage{pifont}
\usepackage[x11names,table]{xcolor}
\usepackage{subfigure}

\usepackage{bm}
\addtocounter{page}{0}
\usepackage{array}
\usepackage{mciteplus}

\begin{tocentry}
\includegraphics[width=8.25cm, height=4.5cm]{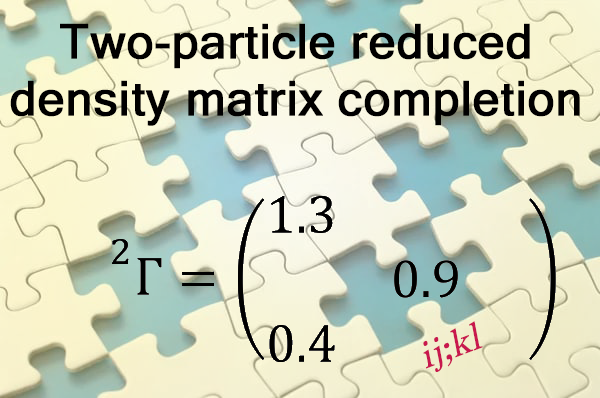}
\end{tocentry}

\title{Is the matrix completion of reduced density matrices unique?}

\author{Gustavo E. Massaccesi}
\affiliation{Departamento de Ciencias Exactas, Ciclo B\'asico Com\'un, Universidad de Buenos Aires, Ciudad Universitaria, 1428 Buenos Aires, Argentina and Instituto de Investigaciones Matem\'aticas ``Luis A. Santal\'o'' (IMAS), Consejo Nacional de Investigaciones Cient\'ificas y T\'ecnicas, Universidad de Buenos Aires. Ciudad Universitaria, 1428 Buenos Aires, Argentina}

\author{Ofelia B. O\~{n}a}
\affiliation{Instituto de Investigaciones Fisicoqu\'imicas Te\'oricas y Aplicadas, Universidad Nacional de La Plata, Consejo Nacional de Investigaciones Cient\'ificas y T\'ecnicas. Diag. 113 y 64 (S/N), Sucursal 4, CC 16, 1900 La Plata, Argentina}

\author{Luis Lain}
\affiliation{Department of Physical Chemistry, Faculty of Science and Technology, University of the Basque Country. PO Box 644, E-48080 Bilbao, Spain}

\author{Alicia Torre}
\affiliation{Department of Physical Chemistry, Faculty of Science and Technology, University of the Basque Country. PO Box 644, E-48080 Bilbao, Spain}

\author{Juan E. Peralta}
\email{juan.peralta@cmich.edu}
\affiliation{Department of Physics, Central Michigan University, Mount Pleasant, MI, 48859, USA}

\author{Diego R. Alcoba}
\email{dalcoba@df.uba.ar}
\affiliation{Universidad de Buenos Aires, Facultad de Ciencias Exactas y Naturales, Departamento de F\'isica. Ciudad Universitaria, 1428 Buenos Aires, Argentina and CONICET - Universidad de Buenos Aires, Instituto de F\'isica de Buenos Aires (IFIBA). Ciudad Universitaria, 1428 Buenos Aires, Argentina}

\author{Gustavo E. Scuseria}
\email{guscus@rice.edu}
\affiliation{Department of Chemistry, Rice University, Houston, TX 77005-1892}
\alsoaffiliation{Department of Physics and Astronomy, Rice University, Houston, TX 77005-1892}

\keywords{reduced density matrix, matrix completion, quantum algorithm}

\begin{document}

\begin{abstract}

Reduced density matrices are central to describing observables in many-body quantum systems. 
In electronic structure theory, the two-particle reduced density matrix (2-RDM) suffices to determine the energy and other key properties. Recent work has used matrix completion, leveraging the low-rank structure of RDMs and approximate theoretical models, to reconstruct the 2-RDM from partial data and thus reduce computational cost. 
However, matrix completion is, in  general, an under-determined problem. 
Revisiting Rosina’s theorem [M. Rosina, Queen's Papers on Pure and Applied Mathematics No. 11, 369 (1968)], 
we here show that the matrix completion is unique under certain conditions, 
identifying the subset of 2-RDM elements that enables its exact reconstruction from incomplete information.
Building on this, we introduce a hybrid quantum–stochastic algorithm that achieves exact matrix completion, demonstrated through applications to the Fermi–Hubbard model.

\end{abstract}

\newpage

The two-particle reduced density matrix (2-RDM) corresponding to an $N$-particle wave function $\Psi$ is defined by\cite{Coleman.book.2000,Davidson1976Reduced}
\begin{equation}
    ^2\Gamma(x_1,x_2;x'_1,x'_2)     
    = \int \; ^N\Gamma(x_1,x_2,x_3,\dots,x_N;x'_1,x'_2,x_3,\dots,x_N) \; dx_3 \dots dx_N
\label{eq1}
\end{equation}
where $^N\Gamma(\bold{x};\bold{x'})=\Psi(\bold{x})\Psi^*(\bold{x'})$ is the $N$-particle density matrix, and normalization has been omitted for clarity. 
This matrix carries all the relevant information needed to evaluate expectation values of one- and two-particle operators, as it is often the  case. 
For example, the energy for a pairwise interacting $N$-particle system may be exactly written as\cite{Coleman.RevModPhys.1963}
\begin{equation}
    E_\Psi=\sum_{ij;kl} \mathrm{^2H}_{ij;kl} \; ^2\Gamma_{ij;kl} \,,
\label{eq2}
\end{equation}
where $\{\mathrm{^2H}_{ij;kl}\}$ is the two-particle reduced Hamiltonian, 
\begin{equation}
\begin{split}
    ^2\Gamma_{ij;kl} &\equiv \int \varphi^*_i(x_1) \varphi^*_j(x_2) \; ^2\Gamma(x_1,x_2;x'_1,x'_2)  \;\varphi_k(x'_1) \varphi_l(x'_2) \; dx_1 dx_2 dx'_1 dx'_2 
    \\
    &= \langle \Psi | {\hat a}^{\dagger}_i {\hat a}^{\dagger}_j {\hat a}_l {\hat a}_k | \Psi \rangle,
\end{split}
\label{eq3}
\end{equation}
and ${\hat a}^\dagger_{i}$ and ${\hat a}_{i}$ denote particle creation and annihilation operators acting on a finite-size single-particle basis $\{\varphi_i\}$, respectively. 
The 2-RDM is a much more compact and economic storage of information than the $N$-particle wave function.\cite{Coleman.book.2000, Mazziotti.book.2007}
However, while the wave function $\Psi$ must satisfy appropriate exchange symmetry and normalization, the 2-RDM must satisfy the so-called  $N$-representability conditions,\cite{Coleman.RevModPhys.1963,Garrod.JMP.1963,Kummer.JMP.1967,Erdahl.IJQC.1978,Mazziotti.PRL.2012} which bear a complexity that grows exponentially with $N$, in order to fulfill Eq. (\ref{eq1}).\cite{Mazziotti.PRL.2012,nielsen2010quantum,Liu.PRL.2007}

Eq. (\ref{eq1}) provides a simple prescription to calculate the 2-RDM given a preimage $\Psi$. The inverse problem, i.e., given a 2-RDM, how to derive a preimage $\Psi$, is known as the reconstruction problem. 
This problem lies at the heart of quantum state and process tomography, which are fundamental techniques used to characterize 
unknown quantum states and processes as well as to quantify the quality of quantum devices.\cite{nielsen2010quantum,PhysRevLett.93.080502,Xin2017Quantum} 
As it will be shown in this Letter, the reconstruction problem is intimately related to the completion problem, which refers to the  process of completing the full, physically valid 2-RDM from a partial subset of its elements (typically those that are directly measured, or approximated).

In 1968, Rosina proved a theorem that establishes
conditions such that the reconstruction is unique.\cite{Rosina1968}  
The theorem shows that the 2-RDM corresponding to a non-degenerate ground state of a quantum system completely determines the exact $N$-particle wave function without any specific information about the
Hamiltonian other than bearing at most two-particle interactions.\cite{Mazziotti1998Contracted,Rosina2000Theorems}
In this Letter we demonstrate that the subset of elements of the 2-RDM needed for a unique reconstruction  of the preimage $\Psi$, and hence for the matrix completion  of the full 2-RDM,  
is linked to the subset of non-zero elements of the two-particle reduced Hamiltonian $\{\mathrm{^2H}_{ij;kl}\}$.

\begin{theorem}
\label{teo1}
(``Uniqueness RDM Completion Theorem") 
For an $N$-particle Hamiltonian with at most two-particle interactions  with reduced representation $\{\mathrm{^2H}_{ij;kl}\}$ and a non-degenerate ground state, the subset of elements of the  2-RDM corresponding to the ground state associated with the non-zero elements of $\{\mathrm{^2H}_{ij;kl}\}$ has a unique preimage, leading to a unique matrix completion of the full 2-RDM via Eq. (\ref{eq1}).
\end{theorem}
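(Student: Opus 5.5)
The plan is a variational argument in the spirit of Rosina's theorem. Let $\Psi_0$ be the non-degenerate ground state of $\hat H=\sum_{ij;kl}\mathrm{^2H}_{ij;kl}\,\hat a^\dagger_i\hat a^\dagger_j\hat a_l\hat a_k$, with energy $E_0$ and 2-RDM $^2\Gamma^0$. Let $S=\{(ij;kl):\mathrm{^2H}_{ij;kl}\neq 0\}$ be the support of the reduced Hamiltonian. Suppose $\Phi$ is any normalized $N$-particle state, or more generally any ensemble $^N\!D=\sum_\alpha w_\alpha|\Phi_\alpha\rangle\langle\Phi_\alpha|$, whose 2-RDM $^2\Gamma$ agrees with $^2\Gamma^0$ on $S$, that is, $^2\Gamma_{ij;kl}={}^2\Gamma^0_{ij;kl}$ for all $(ij;kl)\in S$. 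The goal is to show that $\Phi$ equals $\Psi_0$ up to a phase. Then $^2\Gamma={}^2\Gamma^0$ entrywise via Eq.~(\ref{eq1}), so the completion is unique.

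\textbf{Step 1: the partial data fix the energy.} By Eq.~(\ref{eq2}), the energy of $\Phi$ is $E_\Phi=\sum_{(ij;kl)\in S}\mathrm{^2H}_{ij;kl}\,{}^2\Gamma_{ij;kl}$, because entries outside $S$ are multiplied by zero. Since $^2\Gamma$ and $^2\Gamma^0$ coincide on $S$, this gives $E_\Phi=\sum_{S}\mathrm{^2H}_{ij;kl}\,{}^2\Gamma^0_{ij;kl}=E_0$. Care is needed with index symmetries: the antisymmetry $^2\Gamma_{ij;kl}=-{}^2\Gamma_{ji;kl}$ and hermiticity $^2\Gamma_{ij;kl}={}^2\Gamma^*_{kl;ij}$ relate entries. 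I would therefore take $\mathrm{^2H}$ in its symmetrized form, or equivalently interpret $S$ as closed under these symmetries, so that the restricted sum is well defined. The energy may need to be written as $\mathrm{Re}$ of the sum when only one triangle is specified.

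\textbf{Step 2: the variational principle forces the ground state.} Expand $\Phi=\sum_n c_n\Psi_n$ in the eigenbasis of $\hat H$. Then $E_\Phi-E_0=\sum_{n\ge 1}|c_n|^2(E_n-E_0)$. Every term is non-negative, and non-degeneracy gives $E_n-E_0>0$ for all $n\ge 1$. So $E_\Phi=E_0$ forces $c_n=0$ for $n\ge1$, hence $\Phi=e^{i\theta}\Psi_0$. For an ensemble, the same argument applied to $\mathrm{Tr}(\hat H\,{}^N\!D)$ shows that each $\Phi_\alpha$ with $w_\alpha>0$ lies in the one-dimensional ground space, so $^N\!D=|\Psi_0\rangle\langle\Psi_0|$. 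This gives the unique preimage. Applying Eq.~(\ref{eq1}) to it then reproduces every entry of $^2\Gamma^0$, including those outside $S$, which is the unique matrix completion.

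\textbf{Main obstacle.} The only subtle point is the hypothesis built into the completion problem. The known entries on $S$ must be completed by something that is actually $N$-representable, meaning it comes from some $N$-particle state or ensemble via Eq.~(\ref{eq1}). Without this, an arbitrary matrix agreeing on $S$ is obviously not unique, and the variational bound $E\ge E_0$ fails for non-representable matrices. I would state explicitly that uniqueness holds within the $N$-representable set, which is exactly where the variational principle applies. This also explains why the preimage, rather than the matrix itself, is the natural object.
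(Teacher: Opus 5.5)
Your proposal is correct and follows the same route as the paper. The entries of the 2-RDM on the support $S$ of $\mathrm{^2H}$ fix the energy, so any normalized $N$-representable competitor, pure or ensemble, attains $E_0$ and by non-degeneracy must equal $|\Psi_0\rangle\langle\Psi_0|$, whose 2-RDM then supplies every remaining entry via Eq.~(\ref{eq1}); your explicit spectral-decomposition step and your restriction of uniqueness to normalized $N$-representable completions make precise what the paper's proof leaves implicit.
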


\begin{proof}
Let $S$ be the subset of indices for which the elements of the two-particle reduced Hamiltonian $\{\mathrm{^2H}_{ij;kl}\}$ are non-zero. The energy, Eq. (\ref{eq2}), may be calculated as follows:
\begin{equation}
    E_\Psi=\sum_{(ij;kl)\;\in S}\; \mathrm{^2H}_{ij;kl} \; ^2\Gamma_{ij;kl} \,.
\end{equation}
Since the Hamiltonian consists solely of two-particle operators, the ground-state energy is completely determined by the 2-RDM elements in $S$ and is minimized by the exact ground-state solution. If there existed another $N$-particle-density matrix $^N\Gamma$
 yielding the same 2-RDM elements in $S$ via Eq. (\ref{eq1}), it would necessarily produce the same energy and thus also correspond to a ground state (or an ensemble of ground states), as only ground states attain the minimal energy. 
 This would contradict the assumed non-degeneracy of the ground state. 
  Finally, like in Rosina's theorem,\cite{Rosina1968,Rosina2000Theorems,Mazziotti1998Contracted,MAZZIOTTI2000212,Mazziotti_book2007} since a nondegenerate ground state cannot be represented as a nontrivial ensemble, 
 such a subset of elements of the 2-RDM must admit a pure-state preimage. In consequence, this leads to a {\em unique} matrix completion of the 2-RDM via Eq. (\ref{eq1}).
\end{proof}

We note that only the location of the subset $S$ in the 2-particle reduced Hamiltonian is needed for completion, not the actual matrix element values.
Moreover,  the subset $S$ is basis-dependent, and thus the number of matrix elements needed for the completion 
varies with the representation. 
 To illustrate how the Theorem manifests in practical applications, 
we utilize a hybrid quantum–stochastic algorithm that numerically performs the matrix completion of a 2-RDM.  
The algorithm consists of applying to an initial $N$-particle density matrix a sequence of unitary evolution operators generated by a stochastic process that iteratively refines the partial information encoded in the reduced two-particle state. A similar algorithm has been employed by us to numerically determine the $N$-representability of a pure RDM.\cite{Massaccesi__jctc_2024} 
Through this iterative procedure, the elements of the 2-RDM converge toward the critical subset of elements of a target 2-RDM associated with the non-degenerate ground state of an $N$-particle Hamiltonian involving at most two-particle interactions, thereby enabling its exact completion. 
A summary of the procedure is shown in Algorithm~\ref{Annealing}.

\begin{figure}
\begin{algorithm}[H]
\caption{
Hybrid quantum–stochastic reduced density matrix completion algorithm}
\label{Annealing}
\begin{algorithmic}
\Require 
Subset of elements of target $^2\Gamma_\t$, initial state $\ket{\Psi_0}$, operator pool $\{\hat{O}_i\}_{i=1}^{M}$,\cite{Evangelista.JCP.2019} 
initial temperature $T_0$, temperature decay $\delta_T$, 
initial maximum angle $\theta_{\max}^{(0)}$, and angle update 
$\delta_{\theta}^{\pm}$.
\Ensure $\mathcal{D}_{\min}$.
\Procedure{}{}
\State Initialize $\mathcal{D}_0 = \mathcal{D}(\Psi_0)$.
\State Set $k \gets 0$.
\While{$k < k_{\max}$}
    \State Randomly select an operator $\hat{O}$ from the operator pool $\{\hat{O}_i\}$.
    \State Sample random rotation angle $\theta \sim \mathcal{U}[-\theta_{\max}^{(k)},\,\theta_{\max}^{(k)}]$.
    \State Generate candidate state $\ket{\Psi'} = e^{\theta \hat{O}} \ket{\Psi_k}$.
    \State Compute Hilbert–Schmidt distance to target (cost function) $\mathcal{D}' = \mathcal{D}(\Psi')$
    \State Compute $\Delta \mathcal{D} = \mathcal{D}' - \mathcal{D}_k$.
    \If{$\Delta \mathcal{D} \le 0$}
        \State Accept new configuration: $\ket{\Psi_{k+1}} \gets \ket{\Psi'}$, $\mathcal{D}_{k+1} \gets \mathcal{D}'$.
        \State Update $\theta_{\max}^{(k+1)} = \delta_{\theta}^{+} \, \theta_{\max}^{(k)}$.
    \Else
        \State Compute acceptance probability $p = \exp(-\Delta \mathcal{D} / T_k)$.
        \If{random$(0,1) < p$}
            \State Accept configuration: $\ket{\Psi_{k+1}} \gets \ket{\Psi'}$, $\mathcal{D}_{k+1} \gets \mathcal{D}'$.
            \State Update $\theta_{\max}^{(k+1)} = \delta_{\theta}^{+} \, \theta_{\max}^{(k)}$.
        \Else
            \State Reject configuration: $\ket{\Psi_{k+1}} \gets \ket{\Psi_k}$, $\mathcal{D}_{k+1} \gets \mathcal{D}_k$.
            \State Update $\theta_{\max}^{(k+1)} = \delta_{\theta}^{-} \, \theta_{\max}^{(k)}$.
        \EndIf
    \EndIf
    \State $T_{k+1} = \delta_T \, T_k$.
    \State $k \gets k + 1$.
\EndWhile
\State Record $\mathcal{D}_{\min} = \min_k \mathcal{D}_k$.
\EndProcedure
\end{algorithmic}
\end{algorithm}
\end{figure}

As a proof-of-concept, we consider the ground state of the inhomogeneous Fermi–Hubbard model
for a three-site one-dimensional lattice with open boundary conditions at half-filling, defined by the Hamiltonian\cite{Hubbard1963Electron}
\begin{equation}
\label{eq:hubbard} {\hat H }= 
-\sum_{\langle i, j \rangle,\sigma} ({\hat a}_{i\sigma}^\dagger {\hat a}_{j\sigma} + {\hat a}_{j\sigma}^\dagger {\hat a}_{i\sigma}) 
+\sum_{i,\sigma} \epsilon_{i\sigma} {\hat n}_{i\sigma}
+ U \sum_i {\hat n}_{i\uparrow}{\hat n}_{i\downarrow}, 
\end{equation}
where ${\hat a}^\dagger_{i\sigma}$ $({\hat a}_{i\sigma})$ 
denotes the fermionic creation (annihilation) operator for a particle at site $i$
with spin $\sigma$ ($\uparrow, \downarrow$), and ${\hat n}_{i\sigma}={\hat a}^\dagger_{i\sigma}{\hat a}_{i\sigma}$ is the corresponding number operator. 
The notation $\langle i, j \rangle$ indicates nearest-neighbor pairs on the lattice. 
The on-site energy parameters $\epsilon_i$ are introduced to break spin and spatial symmetries and remove ground-state degeneracies,
and the on-site interaction strength $U$ is assumed to be positive.

\begin{figure}
\centering
\includegraphics[width=1.0\linewidth]{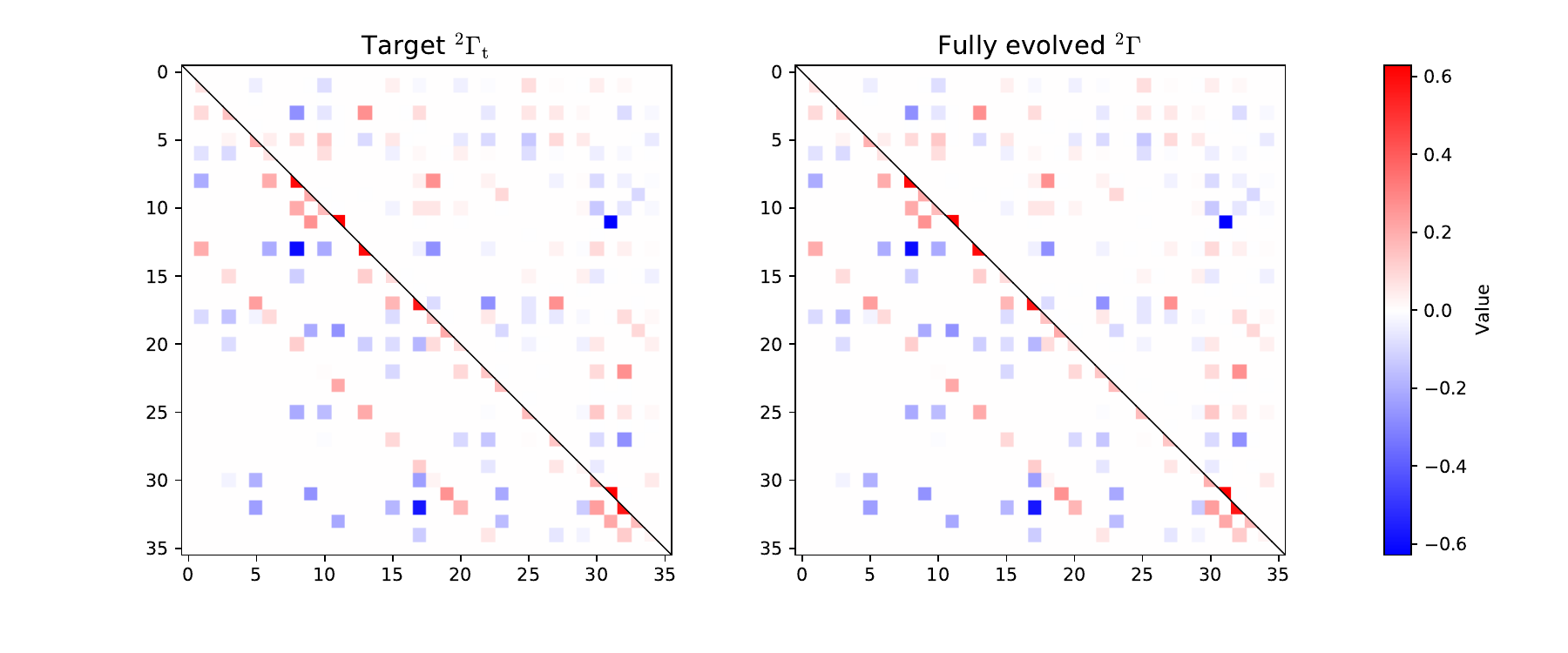}

\caption{Heat-map representations in the lattice-site basis of the target 2-RDM (left panel), and the reconstructed (fully evolved) density-matrix elements obtained via matrix completion (right panel)
corresponding to the non-degenerate ground state of the inhomogeneous three-site one-dimensional Fermi–Hubbard model with open boundary conditions at half-filling. Density-matrix elements associated with the symmetrized non-zero (zero) elements of the corresponding two-particle reduced Hamiltonian are shown as the lower (upper) triangular part of the matrices.}
\label{fig1}
\end{figure}

In the lattice-site basis, the target 2-RDM corresponding to the non-degenerate ground state comprises 900 elements, out of which 360 fulfill $S_z$-spin symmetry and are non-zero. 
Among these, 184 elements correspond to the symmetrized non-zero elements of the two-particle reduced Hamiltonian, while the remaining elements are determined under the   conditions established in Theorem~\ref{teo1}, as illustrated in Figure~\ref{fig1}. 
The left panel of
Figure~\ref{fig2} shows  the evolution of the partial (critical subset of elements) and complete (full set of elements) Hilbert–Schmidt distances between the  reduced two-particle state of the unitarily evolved $N$-particle density matrix and the target 2-RDM, respectively, obtained from the completion algorithm. 
The corresponding energy deviation and infidelity of the evolved $N$-particle density matrix from the exact ground state are also presented in the right panel of Figure~\ref{fig2}. The evolution is initiated from an arbitrary linear combination of the ground and first-excited states. Overall, the numerical results shown in Figures~\ref{fig1} and ~\ref{fig2} demonstrate that the elements of the evolved 2-RDM progressively converge toward both the critical subset and the remaining components of the target 2-RDM associated with the non-degenerate ground state, thereby demonstrating exact completion.
An analogous behavior is observed in the eigenbasis of the two-particle reduced Hamiltonian. In this representation, 27 elements of the target 2-RDM correspond to the symmetrized non-zero reduced two-particle Hamiltonian entries, while the remaining elements are determined by the conditions stated in Theorem~\ref{teo1}, as illustrated in Figure~\ref{fig:diag}.
As in the lattice case, evolution from a generic low-energy superposition leads to convergence to the exact ground-state 2-RDM, confirming the robustness and generality of the matrix completion scheme across different representations.
Choosing an eigenbasis representation may be particularly useful in systems where symmetry cannot be exploited, such as asymmetric molecules.


\begin{figure}
\centering
\includegraphics[width=1.0\linewidth]{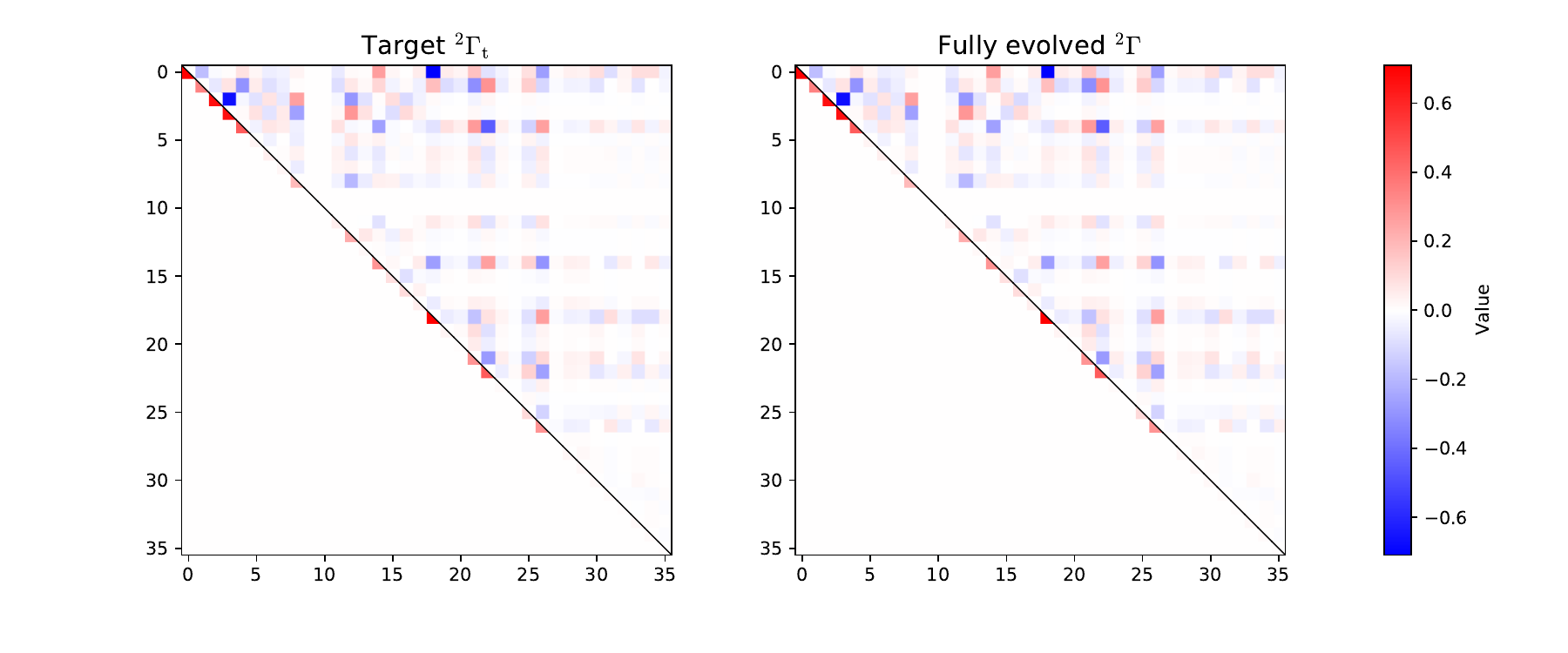}
\caption{Heat-map representations in the eigenbasis of the two-particle reduced Hamiltonian of the target 2-RDM (left panel), and the reconstructed (fully evolved) density-matrix elements obtained via matrix completion (right panel) corresponding to the non-degenerate ground state of the inhomogeneous three-site one-dimensional Fermi–Hubbard model with open boundary conditions at half-filling. Density-matrix elements associated with the symmetrized non-zero (zero) elements of the corresponding two-particle reduced Hamiltonian are shown as the lower (upper) triangular part of the matrices.}
\label{fig:diag}
\end{figure}

Thus far, we have evaluated our algorithm in the noiseless matrix completion setting, where the partial information consists of the critical subset of 2-RDM elements associated with a non-degenerate ground state. To assess the robustness of the methodology under more challenging conditions, we introduce statistical noise into the target 2-RDM in the lattice-site basis, $^2\Gamma_\t$, while retaining the same subset of known elements. Specifically, we define
\begin{equation}
\label{eq:noise}
^2\Gamma_\t(\varepsilon) = {}^2\Gamma_\t + \varepsilon R ,
\end{equation}
where $R$ has elements drawn from a uniform distribution in $[-1,1]$ and the parameter $\varepsilon \in [0,0.1]$ controls the noise strength. 
The chosen values of $\varepsilon$ produce noticeable distortions in the target 2-RDM.
In this regime, the algorithm is expected to converge toward the target only up to a noise-dependent limit: larger noise strengths should yield larger $\mathcal{D}_{\min}$, while in the noiseless case $\mathcal{D}_{\min} \to 0$, as previously demonstrated.
Table~\ref{tab:noise1}  confirms this expectation, showing that 
the converged Hilbert-Schmidt distances for $^2\Gamma_\t(\varepsilon)$ increase as the noise strength increases. 
These results emphasize applications for the proposed algorithm: It can be used not only in noiseless matrix completion settings but also in noisy ones, constructing a $2$-RDM (the evolved RDM) that is closest to the target.

Rosina’s theorem  proves that the 2-RDM corresponding to a non-degenerate  
ground state of a quantum system completely determines the exact $N$-particle wave function without any specific information about the Hamiltonian other than bearing at most two-particle interactions. 
Building on that theorem, we derive the proof that rigorous matrix completion uniqueness conditions exist.
The subset of elements of the 2-RDM needed for the determination of
the unique preimage $\Psi$, and hence for a full reconstruction of the 2-RDM, is linked to the subset of
non-zero elements of the two-particle reduced Hamiltonian.
This result establishes clear conditions under which the 2-RDM can be uniquely reconstructed from incomplete information, thereby strengthening the theoretical underpinnings of matrix completion in electronic structure theory. 
Moreover, the formal conditions established here support future developments in quantum tomography\cite{PhysRevLett.118.020401} by identifying the minimal information needed to reconstruct physically valid two-particle correlations. 
The framework also provides a systematic way to correct defective or noisy RDMs, making it relevant for error mitigation techniques\cite{10.1063/1.4994618} on near term quantum devices,\cite{Rubin_2018, Smart.pra.2022} and complements recent efforts to develop  matrix completion strategies for fermionic RDMs.\cite{Peng2023Fermionic}
As a numerical proof-of-concept, we  have here demonstrated its applicability 
using  a hybrid quantum–stochastic algorithm that achieves unique matrix completion of the full 2-RDM of a Fermi–Hubbard model from a subset of it. 
%
%

\begin{figure}
\centering
\includegraphics[width=1.0\linewidth]{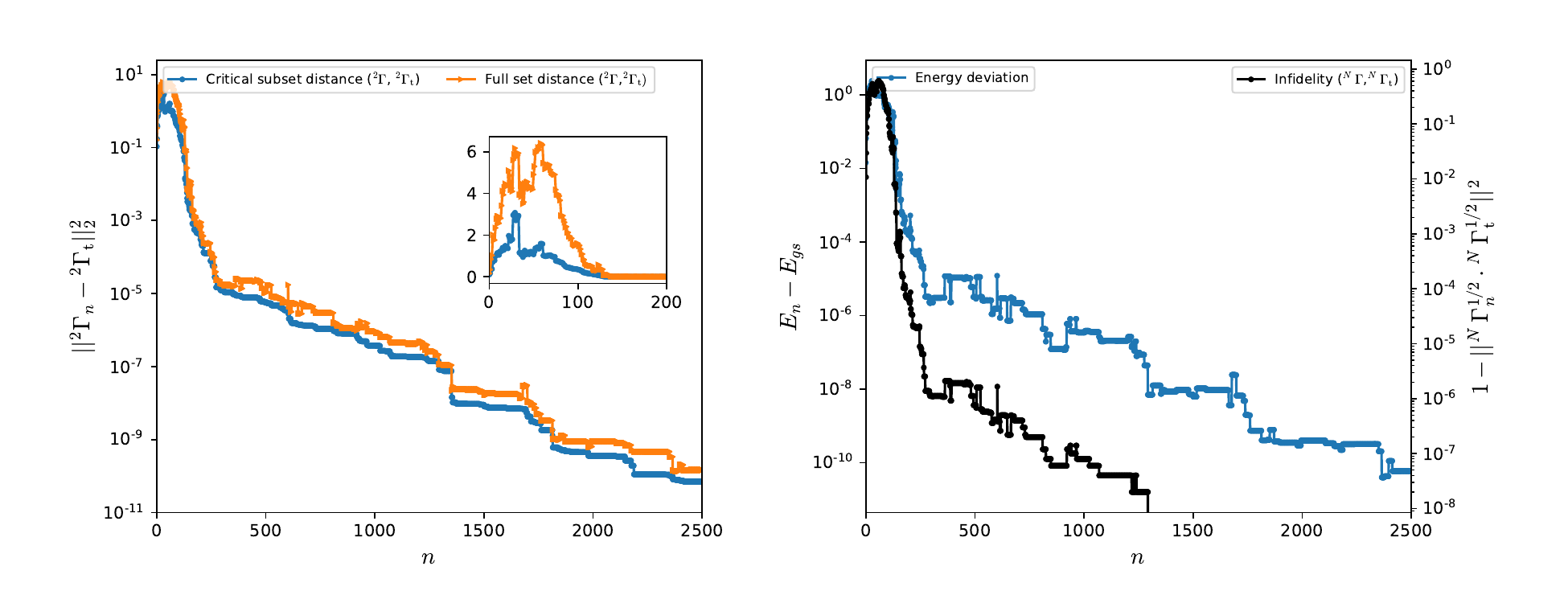}
\caption{Left panel: 
Partial (critical subset of elements) and complete (full set of elements) Hilbert–Schmidt distances between the  reduced two-particle state of the unitarily evolved $N$-particle density matrix, $^2\Gamma$, and the target 2-RDM, $^2\Gamma_\t$, respectively, as a function of iteration number $n$, for the exact ground state of the model system in Figure~\ref{fig1}.
The evolution starts from an arbitrary linear combination of the ground and first-excited states. Inset: enlarged view of the first 200 iterations.  
Right panel: Energy deviation and infidelity of the evolved $N$-particle density matrix with respect to the exact ground-state density matrix as a function of iteration number $n$.
\label{fig2} }
\end{figure}

\begin{table}[H]
\centering
\renewcommand{\arraystretch}{1.4}
\caption{
Partial (critical subset of elements) and complete (full set of elements) Hilbert–Schmidt distances between the  reduced two-particle state of the unitarily evolved $N$-particle density matrix, $^2\Gamma$, and the noiseless and noisy target 2-RDMs, $^2\Gamma_\t$ and $^2\Gamma_\t(\varepsilon)$, respectively, for the exact ground state of the model system in Figure~\ref{fig1}. The noisy targets are constructed by adding random noise of strength $\varepsilon$ to the noiseless 2-RDM (see text for details).}
\label{tab:noise1}
\vspace{5mm}
\begin{tabular}{|c|c|c|c|}
\hline
\multirow{3}{*}{$\varepsilon$} & \multicolumn{3}{c|}{Hilbert–Schmidt distance} \\
\cline{2-4}
                                & \multicolumn{2}{c|}{  Critical subset   } & \multicolumn{1}{c|}{Full set} \\
\cline{2-4}
                                & ($^2\Gamma,^2\Gamma_\t(\varepsilon)$) & $(^2\Gamma,^2\Gamma_\t)$  &  $(^2\Gamma,^2\Gamma_\t)$\\
\hline
$10^{-3}$	 & $2.99 \times \;10^{-5}$	  & $2.44  \times\;10^{-5}$	& $1.28 \times\;10^{-4}$\\
\hline
$10^{-2}$	 & $3.22 \times \;10^{-3}$ &  $2.44 \times \;10^{-3}$	& $1.52 \times\;10^{-2}$\\
\hline
$10^{-1}$	     & $5.94 \times \;10^{-1}$ &  $2.52 \times\;10^{-1}$  & $4.08 \times\;10^{-1}$\\
\hline
\end{tabular}
\end{table}

\section{Acknowledgments}

OBO, GEM, PC, and DRA acknowledge the financial support from the Consejo Nacional de Investigaciones Cient\'{\i}ficas y T\'ecnicas (grant No. PIP KE3 11220200100467CO and PIP KE3 11220210100821CO). OBO, GEM, PC, and DRA acknowledge support from the Universidad de Buenos Aires (grant No. 20020190100214BA and 20020220100069BA) and the Agencia Nacional de Promoci\'on Cient\'{\i}fica y Tecnol\'ogica (grant No. PICT-201-0381).  JEP acknowledges support from NSF award number DMR-2318872. 
GES is a Welch Foundation Chair (C-0036), and his work was supported by the U.S. Department of Energy under Award No. DE-SC0019374. 

\bibliography{references}

\end{document}